\newtheorem{remark}{Remark}
\newtheorem{theorem}{Theorem}
\title{\LARGE \bf
Reduced-order Smith predictor for state feedback control with guaranteed stability*
}
\author{Jesus-Pablo Toledo-Zucco$^{1}$, Frédéric Gouaisbaut$^{2}$, and Gaetan Chapput$^{3}$
\thanks{*This work was supported by LAAS-CNRS, INSA-Toulouse and UPS}
\thanks{$^{1}$Jesus-Pablo Toledo-Zucco is with INSA-Toulouse and LAAS-CNRS, Toulouse, France.
        {\tt\small jptoledozu@laas.fr}}%
\thanks{$^{2}$Frédéric Gouaisbaut is with UPS and LAAS-CNRS, Toulouse, France.
        {\tt\small }}%
 \thanks{$^{3}$Gaetan Chapput is with LAAS-CNRS, Toulouse, France.
        {\tt\small }}%
}
\begin{document}

\maketitle
\thispagestyle{empty}
\pagestyle{empty}

\begin{abstract}
This article deals with the implementation of the Smith Predictor for state feedback control in state space representation. The desired control law, obtained using partial differential equations and backstepping control, contains an integral term that has to be approximated for implementation. In this article, we propose a new way to implement this control law using a dynamic controller. The control law is composed of a state feedback term and a dynamic term that approaches the integral term that has to be estimated for implementation. Using a Lyapunov functional, we provide sufficient conditions, in terms of a linear matrix inequality, to guarantee that the closed-loop system is stable when the proposed control law is applied. We use three examples, taken from the literature, to show the benefits of the proposed approach.
\end{abstract}


\section{INTRODUCTION}

Controlling dynamical systems submitted to an input delay has attracted a lot of research during decades.
Firstly, many industrial systems are affected by delays (transport delays, communication delays) and not taking them into account during the design of the control law can lead to poor performances or even instability of the closed loop.
Furthermore, the introduction of such delays in the closed loop changes the nature of the system from a finite dimensional system to an infinite dimensional system. These systems are generally difficult to stabilize with classical tools of the literature.
Robust approaches (see for instance the monograph \cite{Fridman2014}, where the delays are modeled like a structured uncertainty, have been a first method to systematically control a input time delay system. But theses approaches have proven be very conservative especially for large delays.
The Smith Predictor (SP) control technique \cite{Smith1957JournalCloser} was a revolutionary approach to deal with dead time. By incorporating a compensation (or predictor) term, the closed-loop system behaves in the way it has been designed without delay but delayed in the same amount of time as the input delay. By removing the delay from the closed-loop dynamics,  many techniques like PID, pole placement or robust control \cite{MirkinTadmor2002} have been combined with the Smith predictor to control the system. 

Recent extensions have been proposed to deal with long time delays \cite{Hagglund1992JournalPredictive}, \cite{Briones2021ConferenceGeneralized} and with unstable plants \cite{Rojas2023ConferenceGeneralized}, \cite{Marquez2024JournalSimple}.
In state space representation, the analog SP has been developed using Finite Spectrum Assignment (FSA) \cite{Kwon1980JournalFeedback}, \cite{Manitius1979JournalFinite}, a reduction technique \cite{Artstein1982JournalLinear} and more recently using Partial Differential Equations (PDEs) and backstepping techniques \cite{Krstic2008JournalBackstepping}, \cite{Krstic2009BookDelay}.
This last approach is very interesting since, by modeling the delay as a transport equation, it allows not only to design of the control law but also to find the associated Lyapunov functional. 

Nevertheless, the control laws obtained either from the FSA or the PDE-backstepping approach are infinite dimensional and depend, via an integral, on the past values of the control input and it should be approximated, a technique known as late lumping. One of the main complication when implementing the control law obtained either from the FSA or the PDE-backstepping approach is that the integral terms have to be approximated using a finite dimensional systems.
Indeed, as noted in \cite{vanAssche1999ConferenceSome}, classical approximations techniques like Newton-Costes or Padé approach (see also \cite{engelborghs2001limitations} can destabilize the system. In a comparison with different approximation techniques, the conclusion is that even the most accurate methods can fail, whereas the least accurate ones can preserve closed-loop stability.
In \cite{Mirkin2004}, the authors points out that the instability phenomenon stems for one hand from a poor approximation of the control law at high frequencies and for the other hand from low robustness to high-frequency uncertainties. Hence, the authors propose to design a strictly proper approximation of the integral term leading to a vanishing perturbation in high frequency leading to a stabilizing controller.
In \cite{MondieMichiels2003}, the same idea is employed: the distributed parameter term of the FSA is replaced by a sum of point wise lumped delayed value of the input, calculated from a fixed step method and is combined with the design of a law pass filter to avoid the instability of the closed loop.
In \cite{AuriolMorrisDiMeglio2019}, the authors propose a general framework to study the design of approximated control laws issued from a backstepping approach (see for instance \cite{VazquezAuriolBribiescaArgomedoKrstic2026} for a complete survey). Using some specific approximation schemes combined with a Lyapunov function, the authors proposes some sufficient conditions for the stability of the closed loop.

In this article, we profit of the PDE-backstepping approach to approximate the integral term using the Finite Element Method (FEM). It has been shown that, using the FEM, one can preserve the energy structure for a large class of PDEs \cite{Toledo2024JournalStructure}. In particular, for the transport equation, the finite-dimensional approximation provided in  \cite{Toledo2024ConferenceScattering} preserves the scattering structure of the energy, meaning that for any order of approximation, the finite-dimensional approximation remains stable.  

In this article, we propose a methodology to implement the control law proposed in \cite{Krstic2008JournalBackstepping} (the analog to the SP), using a dynamic controller. The control law is then composed of a state feedback plus a dynamic feedback, in which the dynamic feedback is composed of a finite-dimensional approximation of the PDE \cite{Toledo2024ConferenceScattering}. 

The article is organized as follows: in Section \ref{Sec:ProblemFormulatuion}, we set the problem and provide an explanation of the main results of this article. In Section \ref{Sec:Numerical}, we recall the numerical approximation of the PDE. In Section \ref{Sec:Stability}, we provided the main result of this article, which corresponds to the stability guarantees of the closed loop. In Section \ref{Sec:Example}, we provide three examples taken from the literature. Finally, in Section \ref{Sec:Conclusions}, we provide some conclusions and future work.

\subsection*{Notation}
We denote by $0_{a\times b}$ a null matrix of size $a \times b$. When a matrix is written with subscript as follows $\left[\quad  \right]_{a\times b}$, the subscript $a \times b$ precises the dimension of the matrix. 

\section{PROBLEM FORMULATION}\label{Sec:ProblemFormulatuion}

\subsection{System in open loop}

We consider the following linear system 
\begin{equation}\label{Eq:SYS1}
\begin{cases}
\dot{X}(t) = AX(t) + BU(t-D), \\ 
U(t+\xi+D) = u_0(\xi), \, t \in [-D,0],\\
X(0)=X_0
\end{cases}
\end{equation}
subjected to an input time delay of a constant value $D$, $X(t) \in \mathbb{R}^n$ is the state, $U(t) \in \mathbb{R}$ is the input, $A \in \mathbb{R}^{n\times n}$ is the state matrix, $B \in \mathbb{R}^{n\times 1}$ is the input matrix.  $X_0$ and $u_0(\zeta), \forall \zeta \in [0,D]$ are the initial conditions of the ODE part and delayed control part respectively.
We assume that $(A,B)$ is controllable and the matrix $A$ may be unstable.

The input-delayed system \eqref{Eq:SYS1} can be modeled as a cascade interconnection between an ODE and a transport equation as follows:
\begin{equation}\label{Eq:SYS2}
\begin{cases}
\dot{X}(t) = A X(t) + B u(0,t), \\
\dfrac{\partial u}{\partial t} (\zeta,t) = \dfrac{\partial u}{\partial \zeta} (\zeta,t), \\
u(D,t) = U(t),\\
u(\zeta,0) = u_0(\zeta) ,\quad  \forall \zeta \in [0,D]\\
X(0)=X_0
\end{cases}
\end{equation}
in which $\zeta \in [0,D]$ is the spatial variable, $u(\zeta,t) \in \mathbb{R}$ is the infinite-dimensional variable. Notice that the PDE in \eqref{Eq:SYS2} corresponds to the transport equation in which a signal that enters at $\zeta =D$ exits at $\zeta = 0$, $D$ seconds later.


\subsection{Desired system in closed loop}

Following the backstepping approach popularized by \cite{Krstic2008JournalBackstepping}, we propose a target system, i.e. a desired behavior in closed-loop given by the following dynamical system:

\begin{equation}\label{Eq:DesiredCL2}
\begin{cases}
\dot{X}(t) = (A+BK) X(t) + B w(0,t), \\
\dfrac{\partial w}{\partial t} (\zeta,t) = \dfrac{\partial w}{\partial \zeta} (\zeta,t), \\
w(D,t) = 0,\\
w(\zeta,0) = w_0(\zeta) ,\quad  \zeta \in [0,D]\\
\end{cases}
\end{equation}
in which $K\in \mathbb{R}^{1 \times n}$ is a state feedback gain designed such that $A+BK$ is Hurwitz \textcolor{black}{and $w(\zeta,t)$ is the following backstepping transformation (See \cite[Section~4]{Krstic2008JournalBackstepping}):
\begin{equation*}
    w(\zeta,t) = u(\zeta,t)- \int_0^\zeta K e^{A(\zeta - y)}B u(y,t)dy - K e^{A\zeta}X(t).
\end{equation*}} 
We remark that for the undelayed case, the computation of $K$ can be achieved using pole-placement or linear quadratic regulator techniques.

In that case, the proposed control law achieving the closed loop \eqref{Eq:DesiredCL2} has been designed by \cite[Section~4]{Krstic2008JournalBackstepping} and can be viewed as an extension in the state space domain of the Smith predictor control:
\begin{equation}\label{Eq:ControlLaw}
\begin{split}
& U(t) = \int_0^D K e^{A(D-\zeta)}Bu(\zeta,t)d\zeta + K e^{AD} X(t), \\
& \dfrac{\partial u}{\partial t} (\zeta,t) = \dfrac{\partial u}{\partial \zeta} (\zeta,t), \quad u(\zeta,0) = u_0(\zeta) ,\quad  \zeta \in [0,D]\\
& u(D,t) = U(t).
\end{split}
\hspace{-0.4cm}
\end{equation}
with $u(\zeta,t)$ being an infinite-dimensional variable driven by the transport equation in a spatial domain $\zeta \in [0,D]$.
\subsection{Problem statement}

The paper proposes a new method to design an approximation of the infinite-dimensional control law \eqref{Eq:ControlLaw}.
The main idea, developed in Section 3, is first to construct an accurate and stable finite-dimensional approximation of the transport equation using the methodology proposed by \cite{Toledo2024ConferenceScattering}. This latter approximation is then combined with the initial control law \eqref{Eq:ControlLaw} to design a finite-dimensional dynamical control system stabilizing the delayed system.
In Section 4, the overall closed-loop system is proved to be asymptotically stable using a Lyapunov approach combined with the projection method introduced by \textcolor{black}{\cite{Seuret2015JournalHierarchy}}.

\subsection{Main results}
Inspired by recent results in stability of delay systems \cite{BajodekGouaisbautSeuret2024}, \cite{Seuret2015JournalHierarchy}  and by structure-preserving numerical approximation methods for PDEs \cite{Toledo2024ConferenceScattering,Toledo2024JournalStructure}, in this article, we proposed the following:
\begin{itemize}
\item An structure-preserving spatial discretization approach to approximate the control law in \eqref{Eq:ControlLaw}, obtaining a dynamic controller presented in the form of an ODE. 
\item A stability criterion, based on LMIs, to guarantee that the proposed control law stabilizes the original infinite-dimensional system \eqref{Eq:SYS1}.
\end{itemize}

\section{Numerical approximation of the control law}\label{Sec:Numerical}

In this section, we recall the structure-preserving discretization method proposed in \cite{Toledo2024ConferenceScattering} and we apply it to the desired control law in \eqref{Eq:ControlLaw}. 

\subsection{Numerical approximation of the transport equation}
 We follow the approach proposed in \cite{Toledo2024ConferenceScattering}. Firstly, a finite dimensional linear system \begin{equation}\label{Eq:ODE}
ODE\begin{cases}
E_d \dot{\hat{u}}_d(t) = A_d \hat{u}_d(t) + B_d U(t), \forall t> 0\\
\hat{u}_d(0) = \hat{u}_{d0},
\end{cases}
\end{equation} is designed to approximate the transport equation:
\begin{equation}\label{Eq:PDE}
PDE\begin{cases}
\dfrac{\partial u}{\partial t} (\zeta,t) =  \dfrac{\partial u}{\partial \zeta} (\zeta,t), \\ u(\zeta,0) = u_0(\zeta),\, \zeta \in [0,D]\\
u(D,t) = U(t), \\
\end{cases}
\end{equation}
where $\hat{u}_d(t) \in \mathbb{R}^N$ a finite-dimensional state variable with initial condition $\hat{u}_{d0}$ and constant matrices $E_d \in \mathbb{R}^{N\times N}$, $A_d \in \mathbb{R}^{N\times N}$, and $B_d \in \mathbb{R}^{N\times 1}$. The integer $N$ is a design parameter and is strongly related to the quality of the approximation. As soon as $N$ increases, the solution of the approximated ODE approaches the one of the PDE.

A first step, which is similar to the well known method of separation of variables,  is to introduce an approximation $\hat{u}(\zeta,t)$ of $u(\zeta,t)$ defined as 
\begin{equation}\label{Eq:Interpolation}
\hat{u}(\zeta,t) := \phi(\zeta)^\top \hat{u}_d(t),
\end{equation}
where $\phi(\zeta)= \begin{bmatrix}\phi_1(\zeta),&\cdots &,\phi_N(\zeta) \end{bmatrix}^\top$ forms a (finite) spatial basis on which the state of the transport equation is projected.\\
Then, we enforce the approximated variable $\hat{u}(\zeta,t)$ to satisfy the same equations of the PDE in \eqref{Eq:PDE}, that is:
\begin{equation}\label{Eq:PDEApp}
PDE\, (Approx.)\begin{cases}
\dfrac{\partial\hat{u}}{\partial t} (\zeta,t) =  \dfrac{\partial \hat{u}}{\partial \zeta} (\zeta,t), \\ \hat{u}(\zeta,0) = u_0(\zeta), \\
\hat{u}(D,t) = U(t). 
\end{cases}
\end{equation}
We replace \eqref{Eq:Interpolation} in \eqref{Eq:PDEApp}:
\begin{equation}\label{Eq:PDEApp2}
PDE\, (Approx.)\begin{cases}
\phi(\zeta)^\top\dfrac{\partial\hat{u}_d}{\partial t} (t) =  \dfrac{\partial \phi }{\partial \zeta} (\zeta) ^\top \hat{u}_d(t), \\ 
\phi(\zeta)^\top \hat{u}_d(0) = u_0(\zeta), \\
\phi(D)^\top \hat{u}_d(t) = U(t). 
\end{cases}
\end{equation}
Multiplying the two first equations in \eqref{Eq:PDEApp2} by $\phi(\zeta)$ at the left and integrating both equations over the spatial domain $\zeta \in [0,D]$, one obtains:
\begin{equation}\label{Eq:PDEApp3}
\left( \int_0^D \phi \phi ^\top d\zeta \right) \dot{\hat{u}}_d = \left( \int_0^D \phi  \dfrac{\partial \phi }{\partial \zeta} ^\top d\zeta \right) \hat{u}_d,
 \end{equation}
\begin{equation}\label{Eq:PDEApp4}
\left( \int_0^D\phi \phi ^\top d\zeta \right) \hat{u}_d(0) = \int_0^D \phi u_0 d\zeta, 
\end{equation}
where we have omitted the time $t$ and space $\zeta$ variables since it is clear from the context. Finally, integrating by part the right-hand side of \eqref{Eq:PDEApp3} and the use the third equation in \eqref{Eq:PDEApp2} we derive a finite dimensional system whose solutions are $\hat{u}_d$:
\begin{equation}
\begin{split}
\left( \int_0^D \phi \phi ^\top d\zeta \right) \dot{\hat{u}}_d =& \left( -\int_0^D \dfrac{\partial \phi }{\partial \zeta} \phi ^\top d\zeta - \phi(0)\phi(0)^\top \right) \hat{u}_d \\
&+ \phi(D) U(t),
\end{split}
 \end{equation}
 which is the ODE in \eqref{Eq:ODE} with matrices:
 \begin{equation}\label{Eq:Matrices}
 \begin{split}
 E_d &=  \int_0^D \phi \phi ^\top d\zeta, \\
 A_d &= -\int_0^D \dfrac{\partial \phi }{\partial \zeta} \phi ^\top d\zeta - \phi(0)\phi(0)^\top, \\
 B_d &= \phi(D), \\
 \end{split}
 \end{equation}
 associated with the initial conditions $\hat{u}_{d0} = E_d^{-1}\int_0^D\phi u_0 d\zeta$.
 \begin{remark}
 In \cite{Toledo2024ConferenceScattering} it has been shown that this numerical scheme preserves the energy structure of the initial PDE. This allows us to guarantee stability of the ODE in the Lyapunov (energy) sense using the approximated energy function of the PDE. In particular, this means that for any order of approximation $N$, the approximated ODE \eqref{Eq:ODE} is stable. 
 \end{remark}
 \subsection{Piecewise linear functions} \label{Sec:Piecewise}
 Several possibilities exit for the choice of the basis functions. Following \cite{Toledo2024JournalStructure}, we choose equidistant piece-wise linear functions as basis functions $\phi(\zeta)$. Considering a subdivision $h=\dfrac{D}{N-1}$ of the spatial variable $\zeta$, $\forall  \zeta \in [0,D]$, we define:
 \begin{equation}
 \begin{split}
     \phi_1(\zeta) &= \begin{cases}
     \tfrac{-1}{h}(\zeta - h),\quad \zeta \in [0,h] \\
     0, \quad \quad \quad \quad\quad elsewhere
     \end{cases} \\
          \phi_j(\zeta) &= \begin{cases}
     \tfrac{1}{h}(\zeta - (j-2)h),\quad \zeta \in [(j-2)h,(j-1)h] \\
     \tfrac{-1}{h}(\zeta - jh),\quad\quad\quad \zeta \in [(j-1),jh] \\
     0, \quad \quad \quad \quad\quad\quad\quad elsewhere
     \end{cases} \\
          \phi_N(\zeta) &= \begin{cases}
     0, \quad \quad \quad \quad\quad \quad\quad \quad elsewhere \\
     \tfrac{1}{h}(\zeta - (N-2)h),\quad \zeta \in [(N-2)h,(N-1)h] \\
     \end{cases} \\
 \end{split}
 \end{equation}
 Straightforward calculations give the following sparse matrices:
 \begin{equation}
     E_d = \dfrac{h}{6}\begin{bmatrix}
     2 &    1 &    0 &     \cdots &    0 \\
     1 &    4 &    1 &     \cdots &    0 \\
     0 &    1 &    4 &     \ddots &    0 \\
     \vdots &    \vdots &    \ddots &     \ddots &    \ddots \\
     0 &    0 &    0 &     \ddots &    2
     \end{bmatrix}
 \end{equation}
  \begin{equation}
     A_d = \dfrac{1}{2}\begin{bmatrix}
     -1 &    1 &    0 &     \cdots &    0 \\
     -1 &    0 &    1 &     \cdots &    0 \\
     0 &    -1 &    0 &     \ddots &    0 \\
     \vdots &    \vdots &    \ddots &     \ddots &    \ddots \\
     0 &    0 &    0 &     \ddots &    -1
     \end{bmatrix}, \quad B_d = \begin{bmatrix}
     0 \\ 0 \\ 0 \\ \vdots \\ 1
     \end{bmatrix},
 \end{equation}
with $E_d$, $A_d\in \mathbb{R}^{N\times N}$ and $B_d \in \mathbb{R}^{N\times 1}$.
\subsection{Dynamic control law}

Replacing $u(\zeta,t)$ by $\hat{u}(\zeta,t)$ from \eqref{Eq:Interpolation} in the equation of the backstepping controller \eqref{Eq:ControlLaw}, we obtain a new dynamical state feedback controller of the form:

\begin{equation}\label{Eq:DynamicController}
\Sigma_c\begin{cases}
\begin{split}
 \dot{\hat{u}}_d(t) &= \tilde{A}_{d}  \hat{u}_d(t) + \tilde{B}_d X(t),\\
 U(t) &= K_1 \hat{u}_d(t) + K_2 X(t).
\end{split}
\end{cases}
\end{equation}
with the following matrices:
\begin{equation}\label{Eq:K1K2}
\begin{split}
K_1 & = \int_0^D Ke^{A(D-\zeta)}B \phi^\top (\zeta) d\zeta, \\
K_2 & = K e^{AD}, \\
    \tilde{A}_d &= E_d^{-1}(A_d+B_dK_1), \\
    \tilde{B}_d &= E_d^{-1}B_dK_2.
\end{split}
\end{equation}

\section{Closed-loop stability in the Lyapunov sense}\label{Sec:Stability}
Even so the obtained ODE \eqref{Eq:ODE} is stable for any order of approximation $N$, the closed-loop stability of the system \eqref{Eq:SYS2} (or system \eqref{Eq:SYS1}) controlled by  \eqref{Eq:DynamicController} is not  guaranteed.
This section is dedicated to provide sufficient conditions to guarantee closed-loop stability of the proposed approach.

First of all, using the known matrices $A$ and $B$, one can design $K$ to assign the desired transient behaviour $A+BK$. Especially, one assume that $A+BK$ is Hurwitz.\\
Then, with the knowledge of the delay $D$, one can choose $1<N\in \mathbb{N}$ and define the matrices $E_d$, $A_d$ and $B_d$ as shown in Section \ref{Sec:Piecewise} and therefore the  dynamical state controller.

To improve the clarity of the main result, let us define the following matrices:
\begin{equation}\label{Eq:Kbar}
    \bar{K} = \begin{bmatrix}
K_2 & K_1 & 0_{1\times l}
\end{bmatrix}
\end{equation}
\begin{equation}\label{Eq:AA}
\mathcal{A} = \begin{bmatrix}  
A & 0_{n\times N} & 0_{n\times l} \\ 
\tilde{B}_d & \tilde{A}_d & 0_{N \times l} \\
0_{l\times n} & 0_{l\times N} & \tfrac{-1}{D}M
\end{bmatrix}, 
\end{equation}
\begin{equation} \label{Eq:BB}
\mathcal{B}_1 = \begin{bmatrix}  
0_{n\times 1}    \\ 
0_{N\times 1}  \\
L(D)   
\end{bmatrix},\quad \mathcal{B}_2 = \begin{bmatrix}  
 B   \\ 
 0_{N\times 1} \\
-L(0)
\end{bmatrix},\,  \mathcal{B} = \begin{bmatrix}
\mathcal{B}_1 & \mathcal{B}_2
\end{bmatrix},
\end{equation}
\begin{equation}\label{Eq:M}
M = \left[\begin{smallmatrix}
0 & 0 & 0 & 0 & 0 & 0 & 0 & \ddots  \\
2 & 0 & 0 & 0 & 0 & 0 & 0 &  \ddots   \\
0 & 6 & 0 & 0 & 0 & 0 & 0 &  \ddots   \\
2 & 0 & 10 & 0 & 0 & 0 & 0 &  \ddots   \\
0 & 6 & 0 & 14 & 0 & 0 & 0 &  \ddots  \\
2 & 0 & 10 & 0 & 18 & 0 & 0 &  \ddots   \\
0 & 6 & 0 & 14 & 0 & 22 & 0 &  \ddots   \\
\ddots & \ddots & \ddots & \ddots & \ddots & \ddots &  \ddots & \ddots  \\
\end{smallmatrix}\right]_{l \times l},
\end{equation}

\begin{equation}
L(D) = \left[\begin{array}{c}
1  \\
1    \\
1   \\
1  \\

\vdots  \\
\end{array}\right]_{l\times 1}, \quad L(0) = \left[\begin{array}{r}
1  \\
-1    \\
1   \\
-1  \\
\vdots  \\
\end{array}\right]_{l\times 1},
\end{equation}
\begin{equation}\label{Eq:Q}
Q = \begin{bmatrix}
1 & 0 & 0 & \cdots & 0 \\
0 & 3 & 0 & \cdots & 0 \\
0 & 0 & 5 & \cdots & 0 \\
\vdots & \vdots & \vdots & \ddots & \vdots \\
0 & 0 & 0 & \cdots & 2l-1 \\
\end{bmatrix}_{l\times l},
\end{equation}
\begin{equation}
\bar{Q} = \left[\begin{matrix}
0_{n \times n} & 0_{n \times N} & 0_{n \times l} \\
0_{N \times n} & 0_{N \times N} & 0_{N \times l} \\
0_{l \times n} & 0_{l \times N} & Q \end{matrix}\right],
\end{equation}
\begin{equation}\label{Eq:Psi}
\begin{split}
    \Psi =& \mathcal{A}^\top P + P \mathcal{A} + \alpha (1+D) \bar{K}^\top \bar{K}\\
    &  \quad \quad \quad - \tfrac{\alpha}{D} \bar{Q} + P \mathcal{B}_1 \bar{K}  + \bar{K}^\top \mathcal{B}_1^\top P ,
\end{split}
\end{equation}
\begin{equation}\label{Eq:Lambda}
\Lambda = \begin{bmatrix}
\Psi & P \mathcal{B}_2 \\
\mathcal{B}_2^\top P & -\alpha \end{bmatrix} .
\end{equation}
At this stage, we propose the following theorem:
\begin{theorem}
Consider the open-loop system \eqref{Eq:SYS2} (equivalently \eqref{Eq:SYS1}). If there exist an integer $l \in \mathbb{N}$, a symmetric matrix $P \in \mathbb{R}^{(n+N+l) \times (n+N+l)}$, and a scalar $\alpha$ such that the following LMI
\begin{equation}
    \begin{split}
        P &> 0 \\
        \alpha &> 0 \\
        \Lambda &< 0
    \end{split}
\end{equation}
is satisfy, then the closed-loop system between \eqref{Eq:SYS2} and \eqref{Eq:DynamicController} is asymptotically stable.
\end{theorem}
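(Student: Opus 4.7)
My plan is to rewrite the closed-loop system as a finite-dimensional LTI ODE driven by the scalar residual $u(0,t)$, using the first $l$ shifted Legendre coefficients of the transport state as additional observer-like variables, and then apply a Lyapunov functional that mixes a quadratic form on the resulting augmented state with a weighted $L^2$ energy of $u$. Combined with the Bessel--Legendre inequality, this will produce precisely the LMI $\Lambda<0$.

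First I would introduce the shifted Legendre polynomials $L_k$ on $[0,D]$ characterised by $L_k(D)=1$, $L_k(0)=(-1)^k$ and $\int_0^D L_i L_j\,d\zeta=\tfrac{D}{2i+1}\delta_{ij}$, and define $\eta_k(t)=\int_0^D u(\zeta,t)L_k(\zeta)\,d\zeta$ for $k=0,\dots,l-1$. Testing the transport equation against $L_k$, integrating by parts, and using the classical identity $\int_0^D L_k'(\zeta)L_j(\zeta)\,d\zeta=2$ for $k>j$ with $k+j$ odd (and $0$ otherwise) yields exactly $\dot\eta=L(D)\,u(D,t)-L(0)\,u(0,t)-\tfrac{1}{D}M\eta$, reproducing the matrix $M$ in \eqref{Eq:M}. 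Combining this with $\dot X=AX+Bu(0,t)$ and the controller \eqref{Eq:DynamicController} (in which $U=K_1\hat u_d+K_2 X=\bar K\xi$), the augmented state $\xi=[X^\top,\hat u_d^\top,\eta^\top]^\top$ obeys $\dot\xi=(\mathcal A+\mathcal B_1\bar K)\xi+\mathcal B_2\,u(0,t)$ with $u(D,t)=\bar K\xi$.

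For the Lyapunov step I would take
\[
V(t)=\xi(t)^\top P\,\xi(t)+\alpha\int_0^D(1+\zeta)\,u(\zeta,t)^2\,d\zeta,
\]
which, since $P>0$, $\alpha>0$ and $1+\zeta\ge 1$, is positive-definite and dominates $\|X\|^2+\|\hat u_d\|^2+\|u(\cdot,t)\|_{L^2}^2$. Differentiating and integrating by parts the PDE contribution $\int_0^D(1+\zeta)\,\partial_\zeta(u^2)\,d\zeta=(1+D)u(D,t)^2-u(0,t)^2-\int_0^D u^2\,d\zeta$, substituting $u(D,t)=\bar K\xi$, and invoking the Bessel--Legendre inequality $\int_0^D u^2\,d\zeta\ge\tfrac{1}{D}\eta^\top Q\eta=\tfrac{1}{D}\xi^\top\bar Q\xi$ (the truncation estimate underlying \cite{Seuret2015JournalHierarchy}), every term collapses into
\[
\dot V\le\begin{bmatrix}\xi\\ u(0,t)\end{bmatrix}^{\!\top}\Lambda\begin{bmatrix}\xi\\ u(0,t)\end{bmatrix},
\]
so that $\Lambda<0$ gives $\dot V\le-\varepsilon(\|\xi\|^2+u(0,t)^2)$ for some $\varepsilon>0$.

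The main obstacles I anticipate are twofold. The bookkeeping part is verifying that the Legendre derivative formula reproduces the exact sparsity pattern of $M$ in \eqref{Eq:M} and that the sign patterns in $L(D)$, $L(0)$ and the weights in $Q$ match the chosen normalization of $L_k$; this is routine once the orthogonality identities are in hand. The conceptually more delicate point is translating the differential inequality on $V$ into genuine asymptotic stability of the full infinite-dimensional closed loop: because the PDE contributes only through $u(0,t)$ and the finite projection $\eta$, one must rely on well-posedness of the PDE--ODE cascade together with a Barbalat-type or semigroup density argument to conclude that $\|u(\cdot,t)\|_{L^2}\to 0$, rather than only decay of the finite-dimensional part $\xi$.
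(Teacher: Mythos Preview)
Your proposal is correct and follows essentially the same route as the paper: the same Legendre projections of $u$, the same augmented state and dynamics $\dot\xi=\mathcal{A}\xi+\mathcal{B}[u(D,t);u(0,t)]^\top$, the same Lyapunov functional \eqref{Eq:Lyap}, and the same Bessel--Legendre bound leading to $\dot V\le[\xi;u_0]^\top\Lambda[\xi;u_0]$. Your final caveat about upgrading $\dot V\le 0$ to genuine asymptotic stability via a Barbalat/semigroup argument is well-placed; the paper's proof in fact stops at $\dot V\le 0$ and does not spell out that last step.
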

\begin{proof}
The closed-loop system between \eqref{Eq:SYS2} and \eqref{Eq:DynamicController} is given by the following ODE-PDE cascade:
\begin{equation}\label{Eq:Closed-Loop}
\begin{cases}
\dot{X}(t) = A X(t) + B u(0,t), \\
\dfrac{\partial u}{\partial t} (\zeta,t) = \dfrac{\partial u}{\partial \zeta} (\zeta,t), \quad \zeta \in [0,D]\\
u(D,t) = K_1 \hat{u}_d(t) + K_2 X(t), \\
\dot{\hat{u}}_d(t) = \tilde{A}_d \hat{u}_d(t) + \tilde{B}_d X(t).
\end{cases}
\end{equation}
To relate the infinite-dimensional state $u(\zeta,t)$ to the finite-dimensional ones $X(t)$ and $\hat{u}_d(t)$, we define the following vector
\begin{equation}
\Omega(t) = \begin{bmatrix}
\Omega_0(t)\\
\Omega_1(t) \\
\vdots \\
\Omega_{l-1}(t)
\end{bmatrix}, \quad \Omega_k(t) = \int_0^D L_k(\zeta)u(\zeta,t)d\zeta,
\end{equation}
with $L_k(\zeta)$ being the Legendre polynomials in $\zeta \in [0,D]$ defined as follows:
\begin{equation}
\begin{split}
    L_k(\zeta) &= (-1)^k \sum_{i = 0}^k p_i^k  \dfrac{\zeta^i}{D^i} , \\
    p_i^l &= (-1)^i \left(\begin{matrix}
k \\ i \end{matrix}\right)\left(\begin{matrix}
k +i\\ i \end{matrix}\right).
\end{split}
\end{equation}
We compute the dynamic of $\Omega(t)$ as follows (for simplicity, we omit time and spatial dependence):
\begin{equation}
\begin{split}
\dot{\Omega}_k &= \int_0^D \left[ L_k \dot{u} \right] d\zeta, \\
& = \int_0^D \left[L_k \dfrac{\partial u}{\partial \zeta } \right]d\zeta, \\
& = \int_0^D \left[\dfrac{\partial}{\partial \zeta}\left(L_k u \right) - \dfrac{dL_k}{d\zeta}u \right]d\zeta, \\
& = \left[L_k u \right]_{\zeta = 0}^{\zeta = D} -\int_0^D  \left[ \dfrac{dL_k}{d\zeta}u \right]d\zeta. \\
\end{split}
\end{equation}
Legendre polynomials satisfy:
\begin{equation}\label{Eq:dLdzProperty}
\dfrac{d L_k}{d \zeta} (\zeta) = \sum_{i = 0}^{k-1} \dfrac{2i+1}{D}\left( 1-(-1)^{k+i}\right)L_i(\zeta),
\end{equation}
which means that the derivative of a Legendre polynomial is defined as a linear combination of its previous Legendre polynomials (We refer to \cite[Section~3]{Seuret2015JournalHierarchy} for further information about Legendre polynomials and its properties). Then, by defining
\begin{equation}
L(\zeta) := \begin{bmatrix}
L_0(\zeta) \\
L_1(\zeta) \\
\vdots \\
L_{l-1}(\zeta) 
\end{bmatrix}
\end{equation}
we obtain
\begin{equation}
\dot{\Omega}(t) = L(D)u(D,t) -L(0)u(0,t) -\dfrac{1}{D}M \Omega (t), \\
\end{equation}
in which $M \in \mathbb{R}^{l \times l}$ is defined in \eqref{Eq:M} and is obtained using the property \eqref{Eq:dLdzProperty}.

We consider the following Lyapunov functional
\begin{equation}\label{Eq:Lyap}
V(t) = \eta (t)^\top P \eta (t) + \alpha \int_0^D (1+\zeta)u(\zeta,t)^2 d\zeta
\end{equation}
with $0<P \in \mathbb{R}^{(n+N+l) \times (n+N+l)}$, $\alpha >0$ and
\begin{equation}
\eta(t) = \begin{bmatrix}
X(t) \\
\hat{u}_d(t) \\
\Omega(t)
\end{bmatrix}.
\end{equation}
Notice that the projections $\Omega(t)$ are useful to relate the state of the PDE to the state of the ODEs through the matrix $P$. The dynamic of $\eta(t)$ is then, given by
\begin{equation}
\dot{\eta}(t) = \mathcal{A} \eta(t) + \mathcal{B}\begin{bmatrix} u(D,t) \\ u(0,t) \end{bmatrix}
\end{equation}
with $\mathcal{A}$ and $\mathcal{B}$ defined in \eqref{Eq:AA} and \eqref{Eq:BB}.
Now, we compute the time derivative of the Lyapunov functional \eqref{Eq:Lyap}. For simplicity, we omit time and spatial dependency and we denote $u_D = u(D,t)$ and $u_0=u(0,t)$. The computation is obtained as follows:
\begin{equation*}
\begin{split}
\dot{V}(t) &= \dot{\eta} ^\top P \eta + \eta^\top P \dot{\eta} + \alpha(1+D)u_D^2 - \alpha u_0^2 - \alpha\int_0^Du ^2d \zeta \\
&= {\eta} ^\top\left( \mathcal{A}^\top P + P \mathcal{A} \right) \eta + 2 \eta^\top P \mathcal{B} \begin{bmatrix} u_D \\ u_0
\end{bmatrix} \\
& \quad \quad \quad \quad + \alpha(1+D)u_D^2 - \alpha u_0^2 - \alpha \int_0^Du ^2d \zeta \\
&\leq {\eta} ^\top\left( \mathcal{A}^\top P + P \mathcal{A} \right) \eta + 2 \eta^\top P \mathcal{B} \begin{bmatrix} u_D \\ u_0
\end{bmatrix} \\
& \quad \quad \quad \quad + \alpha(1+D)u_D^2 - \alpha u_0^2 - \tfrac{\alpha}{D} \Omega^\top Q \Omega \\
\end{split}
\end{equation*}
where we have used the orthogonality property of Legendre polynomials (see \cite[Section~3]{Seuret2015JournalHierarchy} or Appendix \ref{Sec:Appendix}). Notice that $u(D,t) = u_D = \bar{K}\eta$, with $\bar{K}$ defined in \eqref{Eq:Kbar}. Then, the previous inequality becomes 
\begin{equation*}
\begin{split}
\dot{V}(t) &\leq  {\eta} ^\top\left( \mathcal{A}^\top P + P \mathcal{A}  \right) \eta+ 2 \eta^\top P \mathcal{B}_2 u_0 - \alpha u_0 ^2
 \\
& {\eta} ^\top\left( \alpha (1+D) \bar{K}^\top \bar{K} - \tfrac{\alpha}{D} \bar{Q} + P \mathcal{B}_1 \bar{K}  + \bar{K}^\top \mathcal{B}_1^\top P \right) \eta \\
\end{split}
\end{equation*}
or equivalently
\begin{equation}
\dot{V} \leq \begin{bmatrix}
 \eta \\ u_0 \end{bmatrix}^\top \begin{bmatrix}
\Psi & P \mathcal{B}_2 \\
\mathcal{B}_2^\top P & -\alpha \end{bmatrix} \begin{bmatrix}
 \eta \\ u_0 \end{bmatrix} = \begin{bmatrix}
 \eta \\ u_0 \end{bmatrix}^\top \Lambda \begin{bmatrix}
 \eta \\ u_0 \end{bmatrix}.
\end{equation}
with $\Psi$ and $\Lambda$ defined in \eqref{Eq:Psi} and \eqref{Eq:Lambda}, respectively. Since $\Lambda<0$ from the statement, then $\dot{V} \leq 0$, which concludes the proof.
\end{proof}


%
%

\section{Examples} \label{Sec:Example}
In this section, we study three examples from the literature. We use YALMIP and SDPT3 interfaces to solve the LMIs in Matlab. We also provide numerical simulations using the midpoint rule for time integration. In all simulations, the initial conditions $X(0)$, $\hat{u}_d(0)$ and $u(\zeta,0)$ are set to zero except for the first example in which we show the response to initial conditions. We add a reference signal $r(t)$ to the control law as follows:
\begin{equation*}
\Sigma_c\begin{cases}
\begin{split}
U(t) &= K_1 \hat{u}_d(t) + K_2 X(t) + Hr(t), \\
 \dot{\hat{u}}_d(t) &= \tilde{A}_{d}  \hat{u}_d(t) + \tilde{B}_d X(t) + E_d^{-1}B_d H r(t).
\end{split}
\end{cases}
\end{equation*}
with $H = -(C(A+BK)^{-1}B)^{-1}$ and $C \in \mathbb{R}^{1 \times n}$ defined for the output $y(t)=CX(t)$ to be tracked by the reference signal $r(t)$.

\subsection{Unstable first-order system with input delay \cite{vanAssche1999ConferenceSome}}\label{Example1}
In \cite{vanAssche1999ConferenceSome} the system \eqref{Eq:SYS1} with $X(t)\in\mathbb{R}$, $A = B=C =1$ with a delay $D =1$ has been used to show that the numerical integration of the integral term can destabilize the closed-loop system. The objective is to place the eigenvalue at $\lambda = -1$. To this end, $K = -2$ guarantees $A+BK = -1$. With the approach proposed in this article, the dynamic controller \eqref{Eq:DynamicController} of order $N =2$ and matrices:
\begin{equation}
    \tilde{A}_d = \begin{bmatrix}
        3.0000  &  5.8731 \\
   -9.0000      & -8.7463
    \end{bmatrix},
    \tilde{B}_d = \begin{bmatrix}
       10.8731 \\
  -21.7463
    \end{bmatrix}
\end{equation}
\begin{equation}
    K_1 = \begin{bmatrix}
-2.0000  & -1.4366
    \end{bmatrix},
    K_2 = 
-5.4366, H = 1,
\end{equation}
guarantees closed-loop stability with $l=4$. In Fig. \ref{fig:Example1}, we show the desired response and the obtained response with approximation of order $N=2$, $N=3$, and $N=10$. Between $t = 0$ and $t = 10$, the reference signal is set zero, and the closed-loop system evolution depends only on its initial conditions. Therefore,  as the delay $D=1$, the system \eqref{Eq:SYS2} is on open loop as the control $u(t)$ is equal to its initial conditions. The trajectories are still diverging from $t=0$ to $t=1$. Then the control law is applied and the states are converging to zero. At $t=10$, a step input is also applied and the trajectories are closed to the ideal trajectories of system \eqref{Eq:DesiredCL2}.
For every case, the closed-loop stability is guaranteed with the corresponding $l$ shown in Table \ref{Table:l}. Notice that for $N=10$ (as discussed in \cite{vanAssche1999ConferenceSome}), the response is almost superposed to the desired one.

\begin{figure}
\begin{center}
\includegraphics[width=8.4cm]{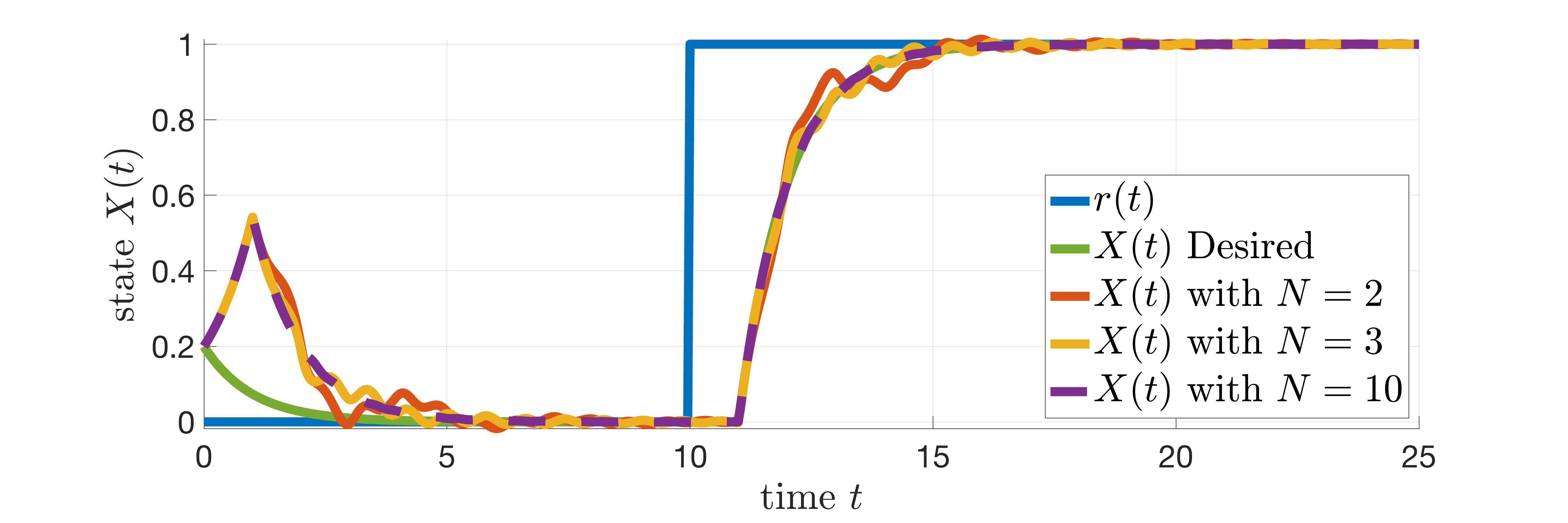}    
\caption{Closed-loop responses Example 1.} 
\label{fig:Example1}
\end{center}
\end{figure}

\subsection{Example 2. Unstable 3rd-order sys. with input delay \cite{Krstic2008JournalBackstepping}}\label{Example2}
We use the same example than in \cite[Section 4]{Krstic2008JournalBackstepping} the system \eqref{Eq:SYS1} with $X(t)\in\mathbb{R}^3$,
\begin{equation*}
A = \begin{bmatrix} 
2 &0 & 1\\
1& -2 & -2 \\
0 &1 &-1\end{bmatrix}, \quad 
B = \begin{bmatrix} 
0 \\ 0 \\1\end{bmatrix},\quad  C = \begin{bmatrix} 
1 & 0 & 0\end{bmatrix},
\end{equation*}
and $K$ designed using LQR with gains $Q = I_3$ and $R =1$. 
With a delay $D = 0.5$ and the case $N=2$ we obtain the following matrices for the dynamic controller \eqref{Eq:DynamicController}:
\begin{equation}
    \tilde{A}_d = \begin{bmatrix}
        7.6103 &  12.7876 \\
  -21.2206 & -19.5753
    \end{bmatrix},
    \end{equation}
    \begin{equation}
    \tilde{B}_d = \begin{bmatrix}
       188.0448  & 12.7732 &  52.8685 \\
 -376.0896  & -25.5463  & -105.7369
    \end{bmatrix}
\end{equation}
\begin{equation}
    K_1 = \begin{bmatrix}
-2.4026  & -1.6969
    \end{bmatrix}, H = 5.6125,
    \end{equation}
    \begin{equation}
    K_2 =  \begin{bmatrix}
-47.0112  & -3.1933 & -13.2171
\end{bmatrix}. 
\end{equation}
In Fig \ref{fig:Example2}, we show the simulation of the obtained control law for the cases $N=2$, $N = 3$ and $N=4$. We can see that for this case, $N=4$ provides a close performance to the desired one.
\begin{figure}
\begin{center}
\includegraphics[width=8.4cm]{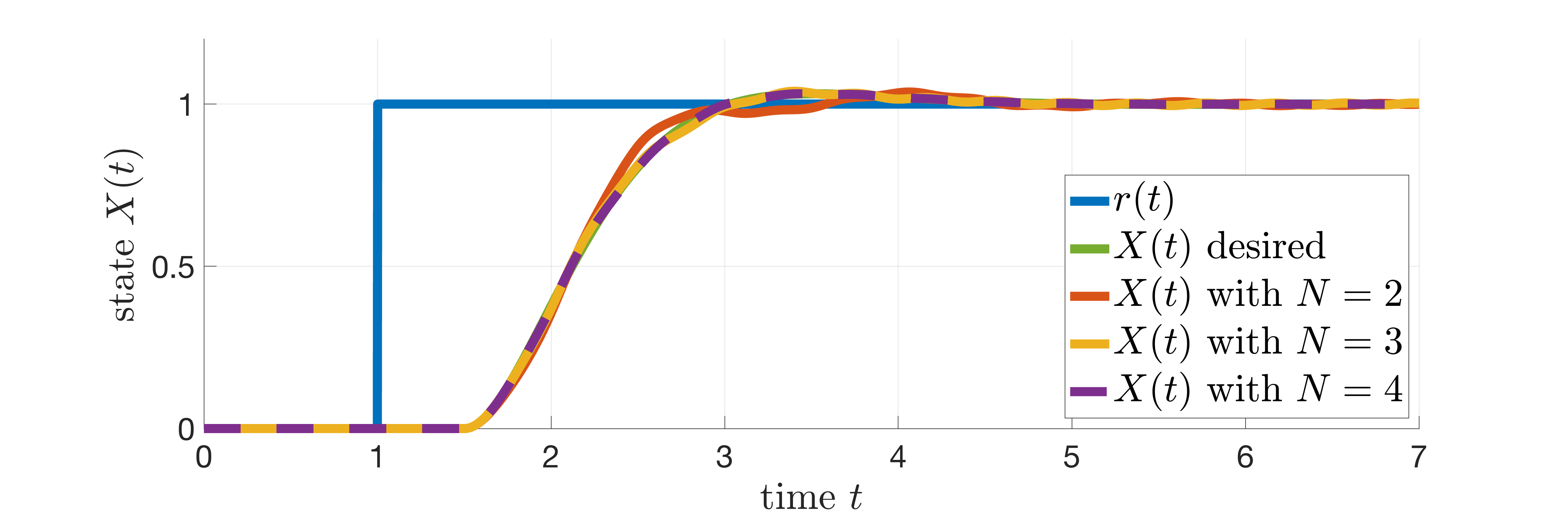}    
\caption{Closed-loop responses Example 2.} 
\label{fig:Example2}
\end{center}
\end{figure}

\subsection{Example 3. Unstable Continuous Stirred Tank Reactor \cite{Marquez2024JournalSimple}}\label{Example3}
We consider the unstable continuous stirred tank reactor proposed in \cite[Section~5.1]{Marquez2024JournalSimple}. The transfer function \cite[Equation~(27)]{Marquez2024JournalSimple} is written in a state space representation with the following matrices
\begin{equation*}
A = \begin{bmatrix} 
   -9.3310 &  -4.2220 &   2.1521\\
    4.0000 &        0 &        0\\
         0 &   4.0000 &        0
\end{bmatrix}, \quad 
B = \begin{bmatrix} 
0.0625 \\ 0 \\0\end{bmatrix}
\end{equation*}
\begin{equation*}
C = \begin{bmatrix} 
0  &      0  &  0.0646\end{bmatrix} 
\end{equation*}
and delay $\tau = D = 1.65$ ($\tau$ is the notation used in \cite{Marquez2024JournalSimple} for the delay). The matrix $K$ is designed using pole placement in such a way that the eigenvalues of $A+BK$ are placed in $\lambda = \begin{bmatrix} -0.5 + 1i &  -0.5 - 1i & -2\end{bmatrix}$. In Fig. \ref{fig:Example3}, we show the simulations for three different order of approximations. In this case, compared to \cite[Fig.~5]{Marquez2024JournalSimple}, we are able to reduce the oscillations and fix a desired behaviour in closed loop. This is possible through the flexibility and the extra degrees of freedom offered by the pole-placement when designing state feedback control instead of PID control.

\begin{figure}
\begin{center}
\includegraphics[width=8.4cm]{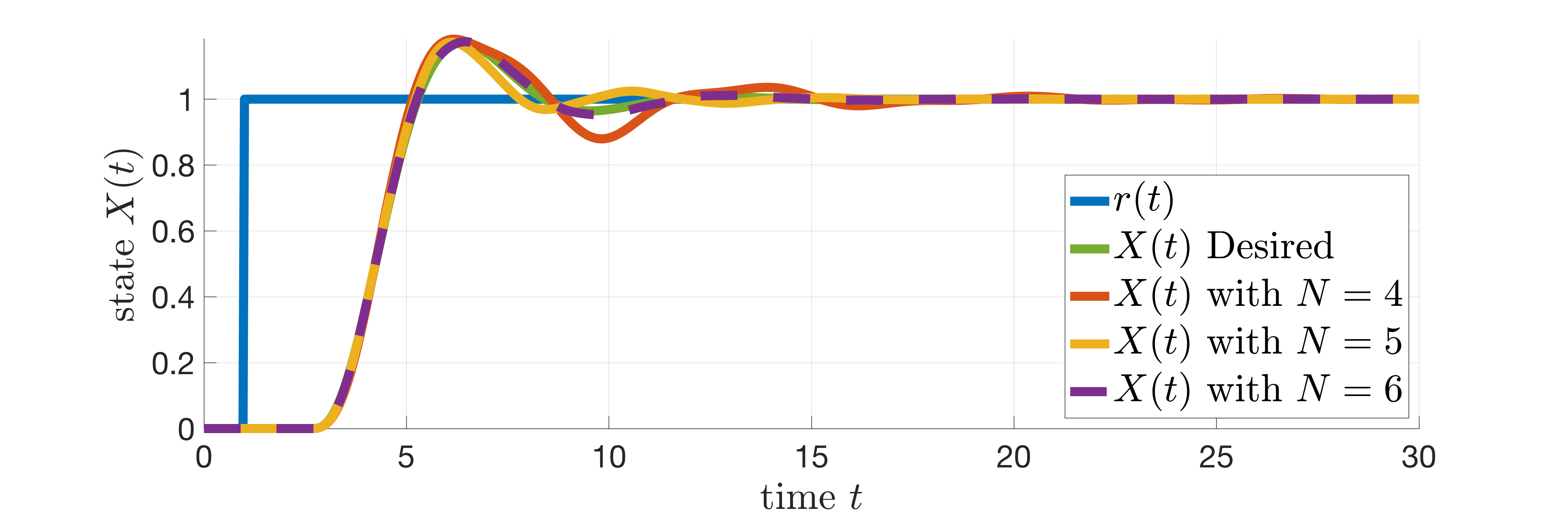}    
\caption{Closed-loop responses Example 3.} 
\label{fig:Example3}
\end{center}
\end{figure}

\begin{table}
\caption{Required $l$ for the stability test}\label{Table:l}
\begin{center}
\begin{tabular}{ |c|c| } 
 \hline
 Example 1 
  &  $l$ \\  \hline
 $N = 2$ & $4$ \\ 
 $N = 3$ & $4$  \\ 
 $N = 10$ & $7$  \\ 
 \hline
\end{tabular}, \begin{tabular}{ |c|c| } 
 \hline
 Example 2
  &  $l$ \\   \hline
 $N = 2$ & $5$ \\ 
 $N = 3$ & $6$  \\ 
 $N = 4$ & $5$  \\ 
 \hline
\end{tabular}, \begin{tabular}{ |c|c| } 
 \hline
 Example 3 
  &  $l$ \\   \hline
 $N = 4$ & $5$ \\ 
 $N = 5$ & $5$ \\ 
 $N = 6$ & $7$ \\ 
 \hline
\end{tabular}
\end{center}
\end{table}

\section{Conclusions and future work}\label{Sec:Conclusions}

In this paper, we present a new way to implement a backstepping controller for a linear system submitted to a known input delay. The proposed controller is given by a dynamic controller combined with a state feedback, which has been constructed as a dynamical approximation of the infinite dimensional control law. The closed-loop stability is guaranteed using a Lyapunov functional which has been shown to be non conservative at least asymptotically. Three examples taken from the literature have been used to show the proposed approach. A future work is the analysis of the robustness to uncertainties of the matrices $A$ and $B$ and on the delay $D$. \textcolor{black}{Another investigation line is the case in which the system is subject to measurements delays and a dynamical observer has to be used to re-build the state. Finally, the PDE-backstepping control law obtained for the wave equation interconnected in cascade with an ODE (See \cite[Section~16]{Krstic2009BookDelay}) is part of the current work of the authors.}

\appendix
\subsection{Orthogonality of Legendre Polynomials} \label{Sec:Appendix}
We use the orthogonality property of Legendre polynomials to show that
\begin{equation}\label{Eq:Orthogonality}
\int_0^D u(\zeta,t)^2 d\zeta \geq \sum_{k = 0}^{l-1} \tfrac{2k+1}{D} \Omega_k (t) ^2 = \tfrac{1}{D}\Omega(t)^\top Q \Omega(t)
\end{equation}
with $Q$ defined in \eqref{Eq:Q}.

\bibliographystyle{plain}
\bibliography{References2.bib}

\end{document}